\newtheorem{thm}{Theorem}[section]
\newtheorem{lem}[thm]{Lemma}%[section]
\renewcommand\thethm {\thesection.\the\value{thm}}
\begin{document}

\title{Embedding of Hypercube into Cylinder
}

%\titlerunning{Short form of title}        % if too long for running head

%\author{
%        Weixing Ji \and Qinghui Liu \and
%        Guizhen Wang \and ZhuoJia Shen%etc.
%}

%\authorrunning{Short form of author list} % if too long for running head

\author{Weixing Ji}
%\address[Weixing Ji]{    Beijing Institute of Technology,
%              Beijing 100081, PR China,
%              %Tel.: +123-45-678910\\
%              %Fax: +123-45-678910\\
%              %\email{jwx@bit.edu.cn}           %  \\
%}
\author{Qinghui Liu}
\address{
              Dept. Computer Science,
              Beijing Institute of Technology,
              Beijing 100081, PR China.
              qhliu@bit.edu.cn,\ jwx@bit.edu.cn,\ gzwang@bit.edu.cn,\ chaosdefinition@hotmail.com.
              %Tel.: +123-45-678910\\
              %Fax: +123-45-678910\\
              %qhliu@bit.edu.cn           %  \\
%             %\emph{Present address:} of F. Author  %  if needed
           }

\author{Guizhen Wang}
%\address{   Beijing Institute of Technology,
%              Beijing 100081, PR China,
%              %Tel.: +123-45-678910\\
%              %Fax: +123-45-678910\\
%              %\email{gzwang@bit.edu.cn}           %  \\
%}
\author{Zhuojia Shen}
%\address{   Beijing Institute of Technology,
%              Beijing 100081, PR China}

%\date{Received: date / Accepted: date}
% The correct dates will be entered by the editor

\maketitle

\begin{abstract}
Task mapping in modern high performance parallel computers can be
modeled as a graph embedding problem, which simulates the mapping as
embedding one graph into another and try to find the minimum
wirelength for the mapping. Though embedding problems have been
considered for several regular graphs, such as hypercubes into
grids, binary trees into grids, et al, it is still an open problem
for hypercubes into cylinders. In this paper, we consider the
problem of embedding hypercubes into cylinders to minimize the
wirelength. We obtain the exact wirelength formula of embedding
hypercube $Q^r$ into cylinder $C_{2^3}\times P_{2^{r-3}}$ with
$r\ge3$.

\keywords{Graph embedding \and Hypercube \and Cylinder \and Parallel computing}
% \PACS{PACS code1 \and PACS code2 \and more}
% \subclass{MSC code1 \and MSC code2 \and more}
\end{abstract}

\section{Introduction}
\label{intro} On one hand, a parallel program can be modeled as a
task graph, in which the vertices of the graph represent a computing
task, and the edges represent the communications among different
tasks. On the other hand, a massive parallel computer has a large
number of processing nodes that are connected together with an
interconnection network. One of the key problems of efficient
execution of parallel programs on these computers is how to find an
optimal mapping from computing tasks to processing nodes, so that
the communication overhead could be reduced when the tasks are run
in parallel. Without loss of generality, this problem can also be
modeled as an graph embedding problem, since task mapping is
actually try to find an optimal embedding of computing task graph
into the interconnection graph (network) with minimum link
congestion. Unfortunately, task mapping is proved as a NP hard
problem and heuristics algorithms are usually used to find an
approximate solution for a given application and interconnection.

Researchers have been working on graph embedding for years and
proposed a number of solution for regular graphs, such as hypercubes
into grids~\cite{MRRM}, binary trees into grids~\cite{OPAT}, honeycomb into hypercubes~\cite{DOINA}, grids into grids~\cite{ROTTGER}. This paper introduce a new technique to
estimate the wirelength of embedding hypercube $Q^r$ into cylinder $C_{2^3}\times P_{2^{r-3}}$ with $r\ge3$.

The rest of the paper is organized as follows. We show the existing
work on graph embedding in Section~\ref{sec:problem}.
Section~\ref{sec:gray} introduces the gray embedding and some useful
property, which is used afterwards in the wirelength calculation.
Section ~\ref{sec:com} discusses some composite sets with Cartesian
production structure. The wirelength calculation of hypercube into
cylinder is given in section ~\ref{sec:embedding}. Conclusion and
future work appear in Section ~\ref{sec:con}.

%In Section 2, we
%define gray embedding, and show some useful property of gray
%embedding. In Section 3, we show Cartesian product structure for a
%class of composite set. In Section 4, we prove Theorem
%\ref{mainthm}.

\section{Problem definition}
\label{sec:problem}

Let $G$ and $H$ be finite graphs with $n$ vertices. $V(G)$ and
$V(H)$ denote the vertex sets of $G$ and $H$ respectively. $E(G)$
and $E(H)$ denote the edge sets of G and H respectively. An
embedding [4] $f$ of $G$ into $H$ is defined as follows:

(i) $f$ is a bijective map from $V(G)$ to $V(H)$;

(ii) $f$ is a one-to-one map from $E(G)$ to $\{Path_f(u,v)\ :\
Path_f(u,v)$ is a path in $H$ between $f(u)$ and $f(v)\}$.

The edge congestion of an embedding $f$ of $G$ into $H$ is the
maximum number of edges of the graph $G$ that are embedded on any
single edge of $H$. Let $EC_f(e)$ denote the number of edges $(u,v)$
of $G$ such that $e$ is in the path $Path_f(u,v)$ between $f(u)$ and
$f(v)$ in $H$, in other words,
$$EC_f(e)=|\{(u,v)\in E(G)\ :\ e\in P_f(u,v)\}|.$$
For any $S\subset E(H)$, define
$$EC_f(S)=\sum_{e\in S} EC_f(e).$$
The edge congestion of an embedding $f$ of $G$ into $H$ is given by
$$EC_f(G,H)=\max_{e\in E(H)}EC_f(e).$$
The minimum edge congestion of G into H
$$EC(G,H)=\min_{f:G\rightarrow H}EC_f(G,H),$$
where the minimum is taken over all embeddings $f$ of $G$ into $H$.

The edge congestion problem of $G$ into $H$ is to find an embedding
of $G$ into $H$ that induces minimum edge congestion $EC(G,H)$.

The wirelength of an embedding $f$ of $G$ into $H$ is given by
$$WL_f(G,H)=\sum_{(u,v)\in E(G)}d_H(f(u),f(v))
=\sum_{e\in E(H)}EC_f(e).$$ where $d_H(f(u),f(v))$ denote the length
of the path $Path_f(u,v)$ in $H$. The wirelength of $G$ into $H$ is
defined as
$$WL(G,H)=\min_{f: G\rightarrow H} WL_f(G,H),$$
where the minimum is taken over all embeddings $f$ of $G$ into $H$.

The wirelength problem of $G$ into $H$ is to find an embedding of
$G$ into $H$ that induces minimum wirelength $WL(G,H)$.

Manuel et al(\cite{MARR}) find that the maximal subgraph problem
play an important role in solving wirelength problem. For a graph
$G$ and an integer $m$,
$$I_G(m)=\max_{A\subset V(G),\ |A|=m}|I_G(A)|,$$
where
$$I_G(A)=\{(u,v)\in E(G)\ :\ u,v\in V(G)\}.$$
A subset $A\subset V(G)$ is called optimal if $|I_G(A)|=I_G(|A|)|$.

The following lemmas are proved in \cite{MRRM}. Note that a set of
edges of $H$ is said to be an edge cut of $H$, if the removal of
these edges results in a disconnection of $H$.

\begin{lem}[Congestion Lemma]%\label{congestion}
Let $G$ be an $r$-regular graph and $f$ be an embedding of $G$ into
$H$. Let $S$ be an edge cut of $H$ such that the removal of edges of
$S$ leaves $H$ into $2$ components $H_1$,$H_2$ and let
$G_1=f^{-1}(H_1)$, $G_2=f^{-1}(H_2)$. Also $S$ satisfy the following
condition,

(i) For every edge $(a, b)\in G_i$, $i=1,2$, $Path_f(a,b)$ has no
edges in $S$.

(ii) For every edge $(a,b)$ in $G$ with $a\in G_1$ and $b\in G_2$,
$Path_f(a, b)$ has exactly one edge in $S$.

(iii) $G_1$ is optimal.

Then $EC_f(S)$ is minimum and $EC_f(S)=r|V (G_1)|-2 |E(G_1)|.$
\end{lem}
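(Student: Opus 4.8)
The plan is to prove the Congestion Lemma by a direct counting argument, bounding $EC_f(S)$ from below for an arbitrary valid $S$ and then showing the bound is achieved when $G_1$ is optimal. First I would observe that by conditions (i) and (ii), an edge $(a,b)\in E(G)$ contributes to $EC_f(S)$ (i.e.\ $S\cap Path_f(a,b)\neq\emptyset$) if and only if exactly one of $a,b$ lies in $G_1$ and the other in $G_2$; moreover each such edge contributes exactly $1$. Hence
\[
EC_f(S)=\sum_{e\in S}EC_f(e)=\bigl|\{(a,b)\in E(G)\ :\ a\in V(G_1),\ b\in V(G_2)\}\bigr|,
\]
that is, $EC_f(S)$ equals the size of the edge boundary of the vertex set $V(G_1)$ in $G$.

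**Reducing to an isoperimetric count.** Next I would rewrite this edge-boundary size in terms of $I_G(V(G_1))$. Since $G$ is $r$-regular, summing degrees over $V(G_1)$ counts each internal edge of $G_1$ twice and each boundary edge once, so
\[
r\,|V(G_1)| \;=\; 2\,|I_G(V(G_1))| \;+\; \bigl|\{(a,b)\in E(G): a\in V(G_1), b\in V(G_2)\}\bigr|.
\]
Combining with the previous display gives $EC_f(S)=r\,|V(G_1)|-2\,|I_G(V(G_1))|$ for \emph{any} embedding satisfying (i) and (ii), with no optimality assumption yet. Then, holding $|V(G_1)|=m$ fixed, minimizing $EC_f(S)$ over all choices of the partition is exactly the same as maximizing $|I_G(V(G_1))|$, which by definition is at most $I_G(m)$, with equality precisely when $V(G_1)$ is an optimal set. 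Invoking hypothesis (iii) that $G_1$ is optimal, we get $|I_G(V(G_1))|=I_G(|V(G_1)|)=|E(G_1)|$, so $EC_f(S)$ attains its minimum possible value and equals $r\,|V(G_1)|-2\,|E(G_1)|$.

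**Anticipated difficulties.** The argument is essentially bookkeeping, so there is no deep obstacle; the one place to be careful is the logical structure of the ``minimum'' claim. One must be clear that the minimum is taken over all embeddings $f'$ (and induced cuts) for which the corresponding $G_1'$ has the \emph{same number of vertices} as $V(G_1)$ — otherwise ``$EC_f(S)$ is minimum'' is not even well-posed, since varying $|V(G_1)|$ changes the quantity $r|V(G_1)|-2|E(G_1)|$. I would state this scope explicitly. A second minor point is justifying that conditions (i) and (ii) force the ``exactly one edge in $S$'' dichotomy to be a genuine partition of $E(G)$ into internal edges and crossing edges; this uses that $f$ is a bijection on vertices, so every vertex of $G$ lands in $H_1$ or $H_2$ but not both, and that $S$ being an edge cut with two components $H_1,H_2$ means there is no third place for a path to go. With those clarifications the proof is complete.
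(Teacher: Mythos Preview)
The paper does not actually prove the Congestion Lemma; it only quotes it (together with the Partition Lemma) and attributes the proof to \cite{MRRM}. So there is no in-paper argument to compare against.

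Your argument is the standard one and is correct. One small correction to your ``anticipated difficulties'': the scope of the minimum is simply over all embeddings $f'$ of $G$ into $H$, with the edge cut $S\subset E(H)$ held fixed. Because $S$ determines $H_1$ and $H_2$ as subgraphs of $H$, and any embedding is a vertex bijection, every $f'$ automatically gives $|V(G_1')|=|V(H_1)|=|V(G_1)|$; you do not need to impose this as an extra constraint. For an arbitrary $f'$ conditions (i) and (ii) need not hold, but you still get the inequality $EC_{f'}(S)\ge |\{(a,b)\in E(G):a\in G_1',\,b\in G_2'\}|$, since any path from $H_1$ to $H_2$ must use at least one edge of $S$. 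Combining this with your degree-count identity and the definition of $I_G(m)$ gives
\[
EC_{f'}(S)\ \ge\ r\,|V(G_1')|-2\,|I_G(V(G_1'))|\ \ge\ r\,|V(G_1)|-2\,I_G(|V(G_1)|)\ =\ r\,|V(G_1)|-2\,|E(G_1)|,
\]
with equality for $f$ by (i)--(iii). That is the complete argument.
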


\begin{lem}[Partition Lemma]%\label{partition}
Let $f: G\rightarrow H$ be an embedding. Let $\{S_1,
S_2,\cdots,S_p\}$ be a partition of $E(H)$ such that each $S_i$ is
an edge cut of $H$. Then
$$WL_f(G,H)=\sum_{i=1}^p WL_f(S_i).$$
\end{lem}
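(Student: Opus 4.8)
The plan is to reduce the statement to the basic double-counting identity for wirelength already recorded in Section~\ref{sec:problem}, namely
$$WL_f(G,H)=\sum_{e\in E(H)}EC_f(e).$$
Here the symbol $WL_f(S_i)$ appearing in the lemma should be read as $EC_f(S_i)=\sum_{e\in S_i}EC_f(e)$, the total congestion carried by the edges of the cut $S_i$; with this reading the lemma becomes a purely combinatorial statement about regrouping a finite sum.

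First I would observe that, since $\{S_1,\dots,S_p\}$ is a partition of $E(H)$, every edge $e\in E(H)$ lies in exactly one block $S_i$, so the index set of the sum above splits as the disjoint union $S_1\sqcup\cdots\sqcup S_p$. Regrouping the terms accordingly gives
$$\sum_{e\in E(H)}EC_f(e)=\sum_{i=1}^{p}\ \sum_{e\in S_i}EC_f(e),$$
and applying the definition $EC_f(S_i)=\sum_{e\in S_i}EC_f(e)$ to each inner sum yields $\sum_{i=1}^{p}EC_f(S_i)=\sum_{i=1}^{p}WL_f(S_i)$, which is the claimed equality. No convergence or finiteness issue arises because $H$ has finitely many edges, so the rearrangement is unconditionally valid.

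There is essentially no obstacle in the proof itself; the only point worth a sentence of comment is why the hypothesis that each $S_i$ is an \emph{edge cut} is imposed at all, since the counting identity does not use it. The reason is that the Partition Lemma is meant to be applied in tandem with the Congestion Lemma, which does require each $S_i$ to be an edge cut splitting $H$ into two components $H_1^{(i)},H_2^{(i)}$ with an optimal preimage $G_1^{(i)}=f^{-1}(H_1^{(i)})$, so that $EC_f(S_i)$ attains its minimum value $r\,|V(G_1^{(i)})|-2\,|E(G_1^{(i)})|$. Retaining the edge-cut hypothesis makes the two lemmas directly composable: once $E(H)$ is exhibited as a partition into edge cuts each satisfying the hypotheses of the Congestion Lemma, summing the per-cut lower bounds produces a lower bound on $WL_f(G,H)$, and exhibiting an embedding that meets it establishes optimality. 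Accordingly I would give the proof of the Partition Lemma in two lines and defer the substantive work to the later sections, where the specific cuts for the embedding of $Q^r$ into $C_{2^3}\times P_{2^{r-3}}$ are constructed and shown to meet the Congestion Lemma's conditions.
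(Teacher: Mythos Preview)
Your argument is correct: the identity is just the regrouping of the finite sum $\sum_{e\in E(H)}EC_f(e)$ over the blocks of the partition, and your remark about why the edge-cut hypothesis is retained is apt. The paper does not actually supply a proof of this lemma---it is quoted from \cite{MRRM}---so there is nothing further to compare.
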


We will discuss the embedding of following graphs.

$Q^r$, the graph of the $r$-dimensional hypercube, has vertex-set
$\{0,1\}^r$, the $r$-fold Cartesian product of $\{0,1\}$. Thus
$|V(Q^r)|=2^r$. $Q^r$ has an edge between two vertices ($r$-tuple of
$0$s and $1$s) if they differ in exactly one entry.

The $1$-dimensional grid with $d\ge2$ vertices is denoted as $P_d$.
The $1$-dimensional cycle with $d\ge2$ vertices is denoted as $C_d$.
The $2$-dimensional grid is defined as $P_{d_1}\times P_{d_2}$ ,
where $d_i\ge2$ is an integer for each $i=1,2$. The cylinder
$C_{d_1}\times P_{d_2}$, where $d_1\ge2$ and $d_2\ge 1$, is a $P_{d_1}\times
P_{d_2}$ grid with a wraparound edge in each column (see e.g.,
Figure \ref{f2}). The torus $C_{d_1}\times C_{d_2}$, where
$d_1,d_2\ge 2$, is a $P_{d_1}\times P_{d_2}$ grid with a wraparound
edge in each column and a wrapround edge in each row.

It is conjectured that the wirelength of embedding hypercube $Q^r$
into cycle $C_{2^r}$ is $3\cdot 2^{2r-3}-2^{r-1}$. It is called CT
conjecture~\cite{Guu,CT,ECT,MRRM}. It is also conjectured in
\cite{MARR} such that the wirelength of embedding hypercube $Q^r$
into cylinder $C_{2^{r_1}}\times P_{2^{r_2}}$ with positive integers
$r_1+r_2=r$ is
$$2^{r_1}(2^{2r_2-1}-2^{r_2-1})+2^{r_2}(3\cdot 2^{2r_1-3}-2^{r_1-1}),$$
and the wirelength of embedding hypercube $Q^r$ into torus
$C_{2^{r_1}}\times C_{2^{r_2}}$ with positive integers $r_1+r_2=r$
is
$$2^{r_1}(3\times2^{2r_2-3}-2^{r_2-1})+2^{r_2}(3\cdot 2^{2r_1-3}-2^{r_1-1}).$$

Manuel et al (\cite{MARR}) verified the case of embedding $Q^r$ into
cylinder $C_{2^2}\times P_{2^{r-2}}$ for $r\ge2$. We prove in this
paper that

\begin{thm}\label{mainthm}
For any $r\ge3$, the wirelength of embedding $Q^r$ into cylinder
$C_{2^{3}}\times P_{2^{r-3}}$ is
$$2^{r_1}(2^{2r_2-1}-2^{r_2-1})+2^{r_2}(3\cdot 2^{2r_1-3}-2^{r_1-1}),$$
where $r_1=3$ and $r_2=r-3$.
\end{thm}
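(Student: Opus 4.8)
The plan is to compute $WL(Q^r, C_{2^3}\times P_{2^{r-3}})$ by exhibiting a specific embedding $f$ that achieves the claimed value as an upper bound, and then using the Partition Lemma together with the Congestion Lemma to show the same value is a lower bound. For the upper bound I would use a \emph{gray embedding}: identify $V(Q^r)=\{0,1\}^r$ with $V(C_8\times P_{2^{r-3}})$ by writing each vertex's coordinate as a pair $(\text{position in the 8-cycle}, \text{position on the path})$, where the 8-cycle coordinate is read off the top three bits via a 3-bit reflected Gray code and the path coordinate is read off the remaining $r-3$ bits via an $(r-3)$-bit reflected Gray code. Consecutive vertices along either factor then differ in a single bit, so hypercube edges map to short paths. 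I would then route each $Q^r$-edge along a shortest path in the cylinder (choosing, for the cyclic direction, the shorter of the two arcs), and compute $WL_f$ directly; this should give exactly $2^{r_1}(2^{2r_2-1}-2^{r_2-1})+2^{r_2}(3\cdot 2^{2r_1-3}-2^{r_1-1})$ with $r_1=3$, $r_2=r-3$.

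For the lower bound, the strategy is to partition $E(H)$ into edge cuts and bound $EC_f(S_i)$ from below for each piece using the Congestion Lemma. The edges of $C_8\times P_{2^{r-3}}$ split naturally into two families: the "horizontal" edges lying inside copies of the path $P_{2^{r-3}}$ (there are $8$ such paths, hence $8\cdot(2^{r-3}-1)$ — no wait, these form cuts differently), and the "vertical/cyclic" edges lying inside copies of the cycle $C_8$ (there are $2^{r-3}$ such cycles). For each of the $2^{r-3}-1$ natural "vertical slices" of the path factor, the $8$ parallel horizontal edges form an edge cut separating the cylinder into two cylinders $C_8\times P_k$ and $C_8\times P_{2^{r-3}-k}$; the preimage under the gray embedding is a subcube-like set, and one checks it is an optimal set in $Q^r$ so the Congestion Lemma gives the exact value $r\,2^r \cdot(\cdot) - \ldots$ — more precisely $EC_f$ on that cut equals $r|V(G_1)|-2|E(G_1)|$ with $G_1$ a union of subcubes. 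Summing these contributions reproduces the $2^{r_1}(2^{2r_2-1}-2^{r_2-1})$ term. Similarly, cutting the cycle factor: for each of the $2^{r-3}$ cycles $C_8$ one must choose an appropriate edge cut (a pair of antipodal-ish edges of the $8$-cycle, or rather the cut induced by a Hamilton-path structure on $C_8$) whose preimage is again optimal; summing gives the $2^{r_2}(3\cdot 2^{2r_1-3}-2^{r_1-1})$ term, mirroring the known $Q^3 \hookrightarrow C_8$ analysis. The Partition Lemma then yields $WL_f(G,H)=\sum_i WL_f(S_i)$ equal to the sum of these minima, and since this matches the upper bound from the explicit embedding, it is the true wirelength.

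The technical heart, developed in the sections on gray embeddings and on composite sets with Cartesian-product structure, is the verification that the relevant preimages $G_1=f^{-1}(H_1)$ are \emph{optimal} subsets of $Q^r$ — i.e. that they realize $I_{Q^r}(|A|)$ — and that the routing satisfies conditions (i)–(ii) of the Congestion Lemma (no edge of $G_i$ is routed through the cut, and every crossing edge uses exactly one cut edge). For the path-direction cuts the preimages are genuine subcubes (consecutive blocks under the Gray ordering of the last $r-3$ coordinates), for which optimality is classical. For the cycle-direction cuts the preimages are \emph{not} subcubes but rather Cartesian products of a subcube $Q^{r-3}$ with an optimal subset of $Q^3$ sitting inside $C_8$ — exactly the "composite set" structure set up in Section~\ref{sec:com}; proving these products remain optimal in $Q^r$ is the main obstacle, and I expect it to rest on a lemma stating that if $A\subseteq V(Q^a)$ and $B\subseteq V(Q^b)$ are optimal then $A\times B$ is optimal in $Q^{a+b}$ (which is true and follows from the nested/ideal structure of optimal sets in hypercubes, the same fact underlying the product formula in the Manuel et al.\ conjecture). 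Once that is in place the rest is bookkeeping: count $|E(G_1)|$ for each cut and sum.
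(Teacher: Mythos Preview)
Your overall strategy---gray embedding, Partition Lemma, Congestion Lemma, and optimality of preimages via Cartesian-product structure---is exactly the paper's. But the execution has three genuine gaps.

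First, your cycle-direction cuts are misconceived. A pair of edges in a single copy of $C_8$ does not disconnect the cylinder, so the Congestion Lemma cannot be applied to ``each of the $2^{r-3}$ cycles'' separately. The paper instead uses four cuts $A_1,\dots,A_4$, each consisting of two cycle-edges in \emph{every} path position (so $2\cdot 2^{r-3}$ edges per cut), with the pair of cycle positions chosen carefully---e.g.\ $A_1$ cuts between positions $0,1$ and between $3,4$---so that the smaller arc $\{1,2,3\}$ has Gray preimage a composite set of $Q^3$. Your later remark that the preimage should be $Q^{r-3}\times(\text{subset of }Q^3)$ is in fact consistent with these global cuts, not with per-cycle ones, so you were partly there but internally inconsistent.

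Second, the lemma you plan to invoke---``$A$ optimal in $Q^a$ and $B$ optimal in $Q^b$ implies $A\times B$ optimal in $Q^{a+b}$''---is false. Take $A=B=\{00,01,11\}\subset Q^2$: each is optimal, but $|I(A\times B)|=12<13=I_{Q^4}(9)$. The paper's Lemma~\ref{cart} requires one factor to be a full subcube; that hypothesis happens to hold in every cut used here (one side is all of $Q^3$ or all of $Q^{r-3}$), which is why the argument goes through. Relatedly, your path-cut preimages are not ``genuine subcubes'' for arbitrary $j$; they are composite sets $Q^3\times g_{r-3}^{-1}(0{:}j{-}1)$, and Lemma~\ref{ch} is what handles them.

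Third, you omit condition (i) of the Congestion Lemma for the cycle cuts, and it is not automatic. With $A_1$ separating cycle positions $\{1,2,3\}$ from $\{0,4,5,6,7\}$, a hypercube edge between positions $0$ and $4$ (antipodal, shortest arc ambiguous) could in principle be routed through $A_1$. The paper disposes of this by a parity argument: $g_3^{-1}(0)=000$ and $g_3^{-1}(4)=110$ have Hamming distance $2$, so no $Q^r$-edge joins those positions; the analogous check is made for each $A_i$.
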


\noindent {\bf Remark 1.} Our argument for Theorem \ref{mainthm}
also valid for embedding of hypercube $Q^6$ into torus $C_8\times
C_8$. So
$$WL(Q^6,C_8\times C_8)=2\times 8\times 20=320.$$

\section{Gray Embedding}
\label{sec:gray} Grid embedding plays an important role in computer
architecture, and researchers believe that gray embedding minimize
wirelength of emdedding hypercube into cycles, cylinders and
torus~\cite{CT,MARR}.

To construct gray embedding, we give first the bijection of $V(Q^r)$
to $V(P_{2^{r_1}}\times P_{2^{r_2}})$.

Given $d>0$ and under Gray code list of $d$ bits, every code
corresponding to a number, e.g.,
$$g_d(0^d)=0,\ g_d(0^{d-1}1)=1,\ g_d(0^{d-2}11)=2,\cdots,\
g_d(10^{r-1})=2^d-1.$$

Define an embedding $gray$ from $Q^r$ into $C_{2^{r_1}}\times
P_{2^{r_2}}$ with $r_1,r_2\ge2$ and $r_1+r_2=r$. The vertices of
$C_{2^{r_1}}\times P_{2^{r_2}}$ have coordinates of the form
$(i,j)$, for $i=0,1,\cdots, 2^{r_1}-1$, $j=0,1,\cdots,2^{r_2}-1$.
Every vertex of $Q^r$ correspond to a string in $\{0,1\}^r$. Take
any $w\in\{0,1\}^r$, write $w=uv$, where $u\in\{0,1\}^{r_1}$,
$v\in\{0,1\}^{r_2}$, define
$$gray(w)=(g_{r_1}(u),g_{r_2}(v)),$$
which corresponding to a unique vertex in $V(P_{2^{r_1}}\times
P_{2^{r_2}})$.

As an example, we illustrate the embedding from $V(Q^4)$ to
$V(P_4\times P_4)$. By the above construction
$$g_2(00)=0,\ g_2(01)=1,\ g_2(11)=2,\ g_2(10)=3.$$
So we can directly get the map $gray$ as shown in Figure \ref{f1}.

\begin{figure}[ht]
\includegraphics[width=0.9\textwidth]{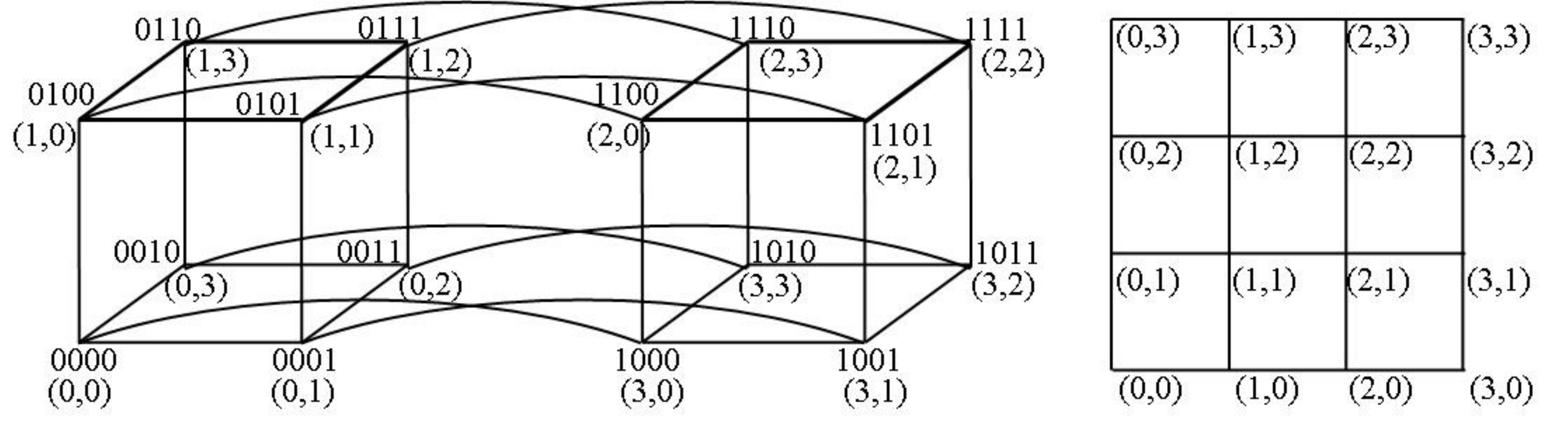}
\caption{\label{f1} The bijective map $gray:\ V(Q^4)\rightarrow
V(P_4\times P_4)$ }
\end{figure}

\begin{lem}\label{straight}
Fix $r\ge2$ and $r_1,r_2\ge1$ with $r_1+r_2=r$. Define a bijection
$gray$ from $V(Q^r)$ to $V(P_{2^{r_1}}\times P_{2^{r_2}})$ as above.
For any edge $(x,y)$ in hypercube $Q^r$, $gray(x)$ and $gray(y)$ are
in the same row or the same column of $P_{2^{r_1}}\times
P_{2^{r_2}}$.
\end{lem}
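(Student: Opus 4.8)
The plan is to reduce the claim to a statement purely about Gray codes. Recall that an edge $(x,y)$ of $Q^r$ corresponds to two binary strings that differ in exactly one of the $r$ positions. Writing $x = uv$ and $y = u'v'$ with $u,u'\in\{0,1\}^{r_1}$ and $v,v'\in\{0,1\}^{r_2}$, the single differing bit lies either entirely within the first block or entirely within the second block. If it lies in the first block then $u\ne u'$ but $v = v'$; if it lies in the second block then $u = u'$ but $v\ne v'$. Under the map $gray(w) = (g_{r_1}(u), g_{r_2}(v))$, in the first case the second coordinate is unchanged, so $gray(x)$ and $gray(y)$ share a column; in the second case the first coordinate is unchanged, so they share a row. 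That already gives the conclusion — the two images agree in one coordinate — and I would present it in exactly this two-case form.

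So the only real content is the observation that a single bit flip in a concatenation $uv$ is a single bit flip in exactly one of $u$ or $v$, with the other block held fixed. This is immediate from the definition of the hypercube edge relation and the definition of $gray$ as a coordinatewise application of $g_{r_1}$ and $g_{r_2}$ to the two blocks. I would state it as: let $k\in\{1,\dots,r\}$ be the index where $x$ and $y$ differ; if $k\le r_1$ then $v=v'$ and hence $g_{r_2}(v)=g_{r_2}(v')$, placing $gray(x),gray(y)$ in the same column; if $k>r_1$ then $u=u'$ and hence $g_{r_1}(u)=g_{r_1}(u')$, placing them in the same row.

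There is essentially no obstacle here; the lemma is a bookkeeping fact. The one thing worth being careful about is that the statement allows $r_1=1$ or $r_2=1$ (unlike the $gray$ construction in the text, which was described for $r_1,r_2\ge2$), so I should make sure the argument uses only that $g_{r_1}$ and $g_{r_2}$ are well-defined bijections on $\{0,1\}^{r_1}$ and $\{0,1\}^{r_2}$ respectively — which holds for any positive exponent, the Gray code on $1$ bit being just $g_1(0)=0$, $g_1(1)=1$. I would also note explicitly that I am not claiming anything about the distance between $gray(x)$ and $gray(y)$ within that row or column (that is taken up elsewhere), only that they are collinear along a grid axis, which is all the lemma asserts.
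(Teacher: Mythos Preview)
Your proposal is correct and follows essentially the same approach as the paper: split each vertex into its two blocks, observe that a single differing bit forces one block to be unchanged, and conclude that the corresponding coordinate of $gray$ is preserved. Your additional remarks about the $r_1=1$ or $r_2=1$ edge case and about not claiming anything on distances are reasonable clarifications but are not present in the paper's (shorter) version.
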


\begin{proof}
Without loose of generality, let $x,y\in\{0,1\}^r$. Write
$x=u_1v_1$, $y=u_2v_2$, with $|u_1|=|u_2|=r_1$ and
$|v_1|=|v_2|=r_2$. Since $x$ and $y$ are different only in one bit,
either $u_1=u_2$ or $v_1=v_2$. Hence either the x-coordinate or the
y-coordinate of $gray(x)$ and $gray(y)$ are equal. The result of the
lemma follows.\qed
\end{proof}

Let $G=Q^r$ and $H=C_{2^{r_1}}\times P_{2^{r_2}}$(respectively
$H=C_{2^{r_1}}\times C_{2^{r_2}}$). Now we can construct graph
embedding from $G$ to $H$. We have already define a vertices
bijection $gray$ from $G$ to $H$. For any edge $(x,y)$ in $G$,
define $Path_{gray}(x,y)$ be the path in $H$ between
$gray(x)$ and $gray(y)$ with minimal number of edges.
%Although for
%$H=C_{2^{r_1}}\times P_{2^{r_2}}$ and $H=C_{2^{r_1}}\times
%C_{2^{r_2}}$, $Path_{gray}(x,y)$ may be different, we
%can use a same notation $gray$ to indicate the two different graph
%embedding without causing confusing.

\section{Structure of a class of composite sets}
\label{sec:com}
In our paper we will discuss some composite sets with Cartesian
product structure.

A $c$-subcube of the $r$-cube is the subgraph of $Q^r$ induced by
the set of all vertices having the same value in some $r-c$
coordinates.

For any $0<k<2^r$, write
$$k=\sum_{i=1}^m 2^{c_i},\quad 0\le c_1<c_2<\cdots< c_m.$$
If $S$ is a subgraph of $Q^r$ which is a disjoint union of
$c_i$-subcubes, $1\le i\le m$, such that each $c_j$-subcube lies in
a neighborhood of every $c_i$-subcube for any $j>i$, then $S$ is
called a composite set (or cubal, see \cite{BGS,H}).

Let $S$ be a subgraph of $Q^r$ with $|V(S)|=k>0$. It is proved in
\cite{H} that $S$ is a composite set if and only if it is optimal,
or equivalently,
$$I_{Q^r}(S)=I_{Q^r}(k).$$

For any graph $S$ and $T$, recall that the Cartesian product
$S\times T$ of $S,T$ is, $V(S\times T)=V(S)\times V(T)$, and
$((u_1,u_2),(v_1,v_2))\in E(S\times T)$ if and only if
%$$E(S\times T)=\{((u,w),(v,w))\ |\ (u,v)\in E(S), w\in V(T)\}\bigcup
%\{((w,u),(w,v))\ |\ w\in V(S), (u,v)\in E(T)\}.$$
$$(u_1,v_1)\in E(S),\ u_2=v_2\in V(T)\quad \mbox{or}\quad
u_1=v_1\in V(S),\ (u_2,v_2)\in E(T).$$

Take any $r=r_1+r_2$ with $r_1,r_2\ge1$. By definition of Cartesian
product of graphs, it is direct to know that $Q^r=Q^{r_1}\times
Q^{r_2}$. If $S$ is a $d_1$-subcube of $Q^{r_1}$, $T$ is a
$d_2$-subcube of $Q^{r_2}$, then $S\times T$ is a
$(d_1+d_2)$-subcube of $Q^{r}$. Hence by definition of composite
set, we see that if $S$ is a subcube of $Q^{r_1}$, $T$ is a
composite set of $Q^{r_2}$, then $S\times T$ is a composite set of
$Q^{r}$. So we get the following lemma

\begin{lem}\label{cart}
Let $S$ and $T$ are composite sets of $Q^{r_1}$ and $Q^{r_2}$ respectively.
Suppose further that at least one of $S$ and $T$ is a subcube,
then $S\times T$ is a composite set of $Q^{r_1}\times Q^{r_2}$.
\end{lem}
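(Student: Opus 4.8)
The plan is to reduce the statement to the characterization of composite sets given just above: a subgraph of $Q^r$ on $k$ vertices is a composite set if and only if it is optimal, i.e.\ $I_{Q^r}(S)=I_{Q^r}(k)$. So I would argue that $S\times T$ is optimal in $Q^{r_1}\times Q^{r_2}=Q^{r_1+r_2}$. Write $|V(S)|=s$, $|V(T)|=t$, so $|V(S\times T)|=st$. The key combinatorial fact I will need is the edge count of a Cartesian product: for any graphs $A,B$,
\[
|E(A\times B)| \;=\; |V(A)|\,|E(B)| \;+\; |V(B)|\,|E(A)|,
\]
which follows directly from the definition of $E(S\times T)$ quoted in the text (an edge is either an $S$-edge with the $T$-coordinate fixed, or a $T$-edge with the $S$-coordinate fixed). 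Applying this with $A=S$, $B=T$ gives $|I(S\times T)| = s\,|I(T)| + t\,|I(S)|$. Since $S$ and $T$ are composite, $|I(S)|=I_{Q^{r_1}}(s)$ and $|I(T)|=I_{Q^{r_2}}(t)$, so it remains to show this quantity equals $I_{Q^{r_1+r_2}}(st)$.

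For that I would use the hypothesis that one of $S,T$ is a subcube; say $S$ is a $d_1$-subcube, so $s=2^{d_1}$ and $|I(S)| = d_1 2^{d_1-1}$. The point of this hypothesis is the earlier observation that if $S$ is a $d_1$-subcube and $T$ is a $c$-subcube then $S\times T$ is a $(d_1+c)$-subcube of $Q^{r_1+r_2}$. Writing $T$ as a disjoint union of subcubes $T_1,\dots,T_m$ of dimensions $c_1<\cdots<c_m$ witnessing that $T$ is composite (with $t=\sum 2^{c_i}$), I claim $S\times T = \bigsqcup_i (S\times T_i)$ is the corresponding composite decomposition of $Q^{r_1+r_2}$ on $st = \sum 2^{d_1+c_i}$ vertices: each $S\times T_i$ is a $(d_1+c_i)$-subcube, the $d_1+c_i$ are strictly increasing, and the nesting/neighborhood condition for the $S\times T_i$ follows from the corresponding condition for the $T_i$ together with the fact that the $S$-factor is common to all of them (a shift within the $T$-coordinates embeds $S\times T_j$ into a neighborhood of $S\times T_i$). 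Hence $S\times T$ meets the definition of a composite set directly, which is exactly what we want — and in passing this shows $I_{Q^{r_1+r_2}}(st) = s\,I_{Q^{r_2}}(t) + t\,I_{Q^{r_1}}(s)$, consistent with the edge-count computation above.

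I expect the main obstacle to be the bookkeeping in verifying the neighborhood/nesting condition for the decomposition $\{S\times T_i\}$: one must check carefully that "each $(d_1+c_j)$-subcube lies in a neighborhood of every $(d_1+c_i)$-subcube for $j>i$" is inherited from the analogous property of $\{T_i\}$ in $Q^{r_2}$. This is essentially the statement that taking the Cartesian product with a fixed subcube $S$ is compatible with the partial "neighborhood" relation on subcubes; it is geometrically clear but needs the definitions to be unwound. A cleaner alternative that avoids this entirely is to bypass the decomposition and argue purely via optimality: use the product edge-count formula above to get $|I(S\times T)| = s\,I_{Q^{r_2}}(t)+t\,I_{Q^{r_1}}(s)$, and then invoke the known closed form for $I_{Q^n}(m)$ together with the fact (again from the quoted characterization, applied to a subcube-times-composite set, which we already know is composite) that this matches $I_{Q^{r_1+r_2}}(st)$; whichever route is taken, the one-of-them-is-a-subcube hypothesis is what makes the product of optimal sets optimal, since in general the product of two optimal sets need not be optimal.
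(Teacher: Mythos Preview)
Your primary argument --- decompose the composite factor $T$ into its subcubes $T_i$, note that each $S\times T_i$ is a $(d_1+c_i)$-subcube of $Q^{r_1+r_2}$, and observe that the neighborhood relation among the $T_i$ is inherited by the $S\times T_i$ since the $S$-coordinate is common --- is exactly the paper's proof, which is the short paragraph immediately preceding the lemma statement and is in fact terser than what you wrote. One caution: your proposed ``cleaner alternative'' via optimality is circular as phrased, since you invoke ``a subcube-times-composite set, which we already know is composite'' to identify $I_{Q^{r_1+r_2}}(st)$, but that is precisely the assertion of the lemma; the decomposition route is the actual content.
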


Recall that the (binary-reflected) Gray code list for $d$ bits can
be generated recursively from the list for $d-1$ bits by reflecting
the list (i.e. listing the entries in reverse order), concatenating
the original list with the reversed list, prefixing the entries in
the original list with a binary $0$, and then prefixing the entries
in the reflected list with a binary $1$.

By this kind of recursive structure, we see that if $0\le k<
2^d$, then $g_d^{-1}(0:k)$ is a composite set, where we use the
matlab notation $m:n$ for any $0\le m\le n$ to indicate the set
$\{m,m+1,\cdots,n\}$ of consecutive integers.

In fact, write
$$k=\sum_{i=1}^m 2^{c_i},\quad 0\le c_1<c_2<\cdots< c_m.$$
Let $a_{m+1}=0$, and for $1\le i\le m$,
$$a_i=\sum_{k=i}^m 2^{c_k}.$$
Then for any $1\le i\le m$, $g_d^{-1}(a_{i+1}:a_{i}-1)$ is a
$c_i$-subcube in $Q^d$. These subcubes are disjoint, and each
$c_j$-subcube lies in a neighborhood of every $c_i$-subcube for any
$j>i$. So we get the following lemma

\begin{lem}\label{ch}
For any $d>0$ and $0\le j< 2^d$, $g_d^{-1}(0:j)$ is a composite set
of hypercube $Q^d$.
\end{lem}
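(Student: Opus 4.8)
The plan is to argue by induction on the dimension $d$, exploiting the reflected recursive structure of the binary Gray list that is recalled just before the statement. The base case $d=1$ is immediate: $g_1^{-1}(0:j)$ is either $\{0\}$, a $0$-subcube, or $V(Q^1)$, a $1$-subcube, and both are composite sets. For $d\ge 2$ we may also assume $g_d^{-1}(0:j)\neq V(Q^d)$ (i.e.\ $j<2^d-1$), since the whole cube is trivially composite.

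For the inductive step, fix $d\ge2$ and $0\le j<2^d-1$, and identify $Q^d$ with $Q^1\times Q^{d-1}$, writing a vertex as $bw$ with $b\in\{0,1\}$ and $w\in\{0,1\}^{d-1}$. From the reflected construction of the Gray list one reads off the two identities $g_d(0w)=g_{d-1}(w)$ and $g_d(1w)=2^d-1-g_{d-1}(w)$. First I would dispatch the easy case $j<2^{d-1}$: here the identities give $g_d^{-1}(0:j)=\{0\}\times g_{d-1}^{-1}(0:j)$, and since $g_{d-1}^{-1}(0:j)$ is a composite set of $Q^{d-1}$ by the induction hypothesis while $\{0\}$ is a $0$-subcube of $Q^1$, Lemma~\ref{cart} finishes this case at once.

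The case $2^{d-1}\le j<2^d-1$ is where the work lies. Writing $j=2^{d-1}+j'$ with $0\le j'<2^{d-1}-1$, the two Gray identities yield the disjoint decomposition $g_d^{-1}(0:j)=\bigl(\{0\}\times V(Q^{d-1})\bigr)\cup\bigl(\{1\}\times g_{d-1}^{-1}(2^{d-1}-1-j':2^{d-1}-1)\bigr)$. The first piece is a $(d-1)$-subcube of $Q^d$. For the second piece I would first observe that flipping the leading bit of $Q^{d-1}$ is a graph automorphism which, again by the reflected structure, reverses the Gray order (one checks $g_{d-1}$ of the flipped vertex equals $2^{d-1}-1-g_{d-1}$ of the original); hence $g_{d-1}^{-1}(2^{d-1}-1-j':2^{d-1}-1)$ is the image under an automorphism of $Q^{d-1}$ of the set $g_{d-1}^{-1}(0:j')$, which is composite by induction. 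Since being composite is equivalent to being optimal — an automorphism-invariant property — the reflected segment is again composite, so by Lemma~\ref{cart} the second piece is a composite set contained in the half-cube $\{1\}\times V(Q^{d-1})$, namely a disjoint union of subcubes each of dimension $<d-1$. It then remains to verify that gluing this family to the $(d-1)$-subcube $\{0\}\times V(Q^{d-1})$ still satisfies the definition of a composite set: disjointness and the nesting among the small subcubes are inherited, and the neighborhood relation of each small subcube with the $(d-1)$-subcube is automatic because the closed neighborhood of $\{0\}\times V(Q^{d-1})$ in $Q^d$ is all of $V(Q^d)$ (every $1w$ is adjacent to $0w$).

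I expect this last verification — isolating exactly which neighborhood and nesting conditions must be re-checked after merging the two halves, and confirming that the leading-bit flip genuinely reverses the Gray order so the reflected interval is an initial Gray segment up to automorphism — to be the only delicate point; everything else is bookkeeping. An alternative route, staying closer to the computation sketched just before the statement, is to skip the induction and work directly from the binary expansion of $|g_d^{-1}(0:j)|$, exhibiting each block $g_d^{-1}(a_{i+1}:a_i-1)$ as a $c_i$-subcube via the recursion and checking the neighborhood relations block by block; the difficulty again localizes in the neighborhood check.
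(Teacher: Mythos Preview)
Your argument is correct, and it takes a genuinely different route from the paper. The paper proceeds directly: it writes $j+1=\sum_{i=1}^m 2^{c_i}$, sets the partial sums $a_i$, and asserts (without further detail) that each Gray interval $g_d^{-1}(a_{i+1}:a_i-1)$ is a $c_i$-subcube and that the required neighborhood relations hold. Your approach is inductive on $d$, splitting on whether $j<2^{d-1}$ and, in the harder case, peeling off the half-cube $\{0\}\times V(Q^{d-1})$ and using the leading-bit flip automorphism together with the composite $\Leftrightarrow$ optimal equivalence to recognize the residual Gray tail as composite. Your route is more modular---it reuses Lemma~\ref{cart} and the cited equivalence rather than re-deriving block structure by hand---at the cost of invoking that equivalence to push compositeness through an automorphism. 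The paper's route is shorter to state but leaves both the ``each block is a subcube'' claim and the neighborhood check as unproved assertions; your version actually carries out the neighborhood verification (via $N[\{0\}\times V(Q^{d-1})]=V(Q^d)$), which is the point you correctly flagged as the only delicate step. You also anticipated the paper's method in your closing remark about working directly from the binary expansion.
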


\section{Hypercube into Cylinder}
\label{sec:embedding}
Now we can prove Theorem \ref{mainthm} by computing $WL(G,H)$ for
$G=Q^r$, $H=C_{8}\times P_{2^{r-3}}$. Denote $r_1=3$ and $r_2=r-3$.

To apply Congestion Lemma, we need to construct suitable edge cuts to form a partition.
For $j=1,2,\cdots,2^{r_2}-1$, define edge cut
$$B_j=\{((i,j-1),(i,j))\ :\ i=0,1,\cdots,7\}.$$
Define edge cuts
$$\begin{array}{rcl}
A_1&=&\{((0,j), (1,j)), ((3,j),(4,j))\ :\ j=0,1,\cdots, 2^{r_2}-1\}\\
A_2&=&\{((1,j), (2,j)), ((6,j),(7,j))\ :\ j=0,1,\cdots, 2^{r_2}-1\}\\
A_3&=&\{((2,j), (3,j)), ((5,j),(6,j))\ :\ j=0,1,\cdots, 2^{r_2}-1\}\\
A_4&=&\{((4,j), (5,j)), ((7,j),(0,j))\ :\ j=0,1,\cdots, 2^{r_2}-1\}\\
\end{array}
$$

For any $1\le i\le4$, $A_i$ disconnects $H$ into two components $X_i$ and
$X'_i$. For any $0\le j<2^{r_2}-1$, $B_j$ disconnects $H$ into two components $Y_j$
and $Y'_j$. We illustrate the case $r=6$, $r_1=r_2=3$ in Figure
\ref{f2}.

\begin{figure}[ht]
\includegraphics[width=0.9\textwidth]{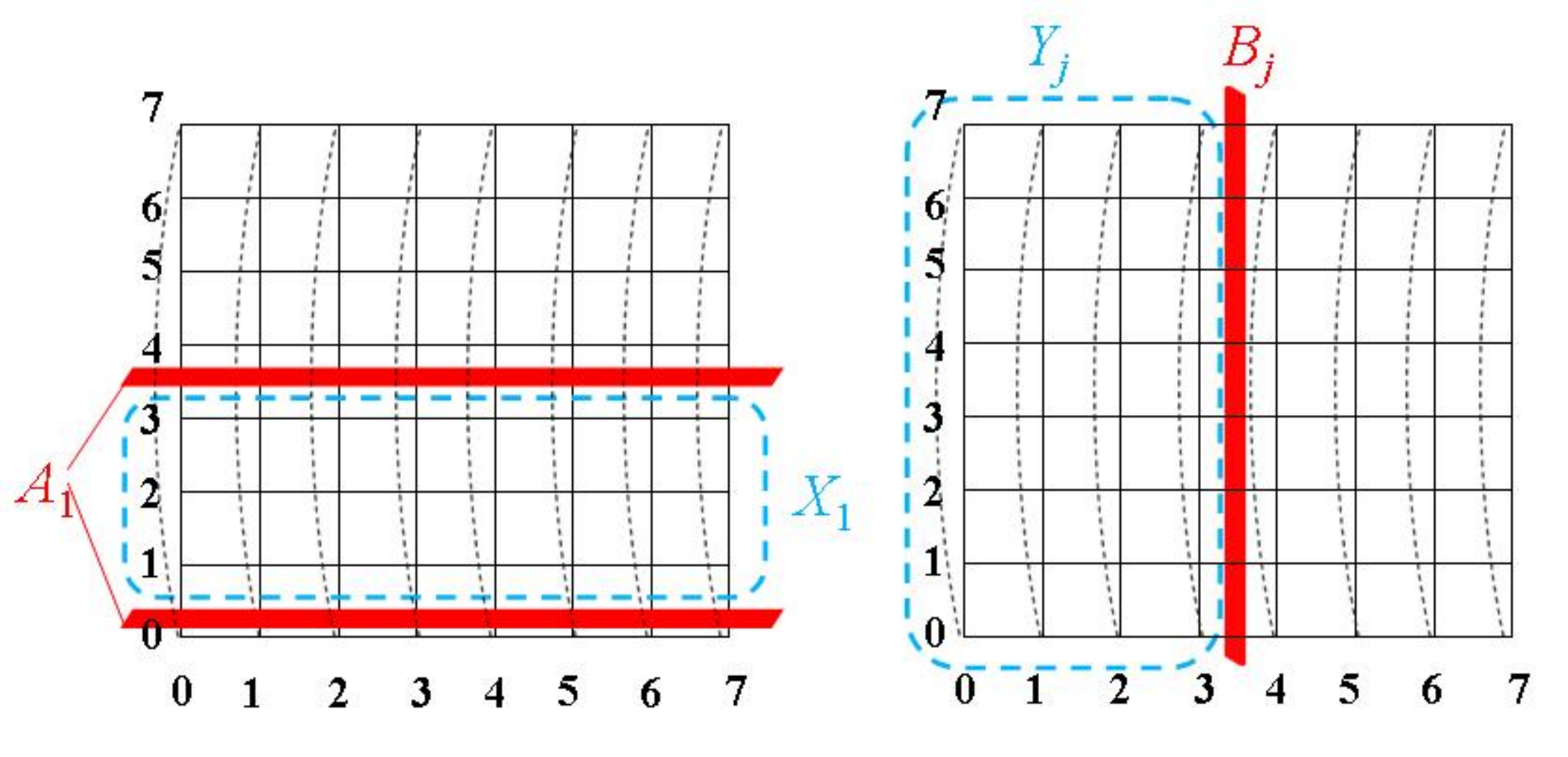}
\caption{\label{f2} The edge cut $A_1$ and $B_j$ in
$H=C_{2^{3}}\times P_{2^{r-3}}$}
\end{figure}

We discuss first $A_1$.

Let $G_1$ and $G'_1$ be the inverse images of $X_1$ and $X'_1$ under
$gray$ respectively. By Lemma \ref{straight}, the edge cut $A_1$
satisfies condition (ii) of the Congestion Lemma.

Now we show that $A_1$ satisfies condition (i). Since for any edge
$(u,v)\in G$, $Path_{gray}(u,v)$ is the shortest path connecting
$gray(u)$ and $gray(v)$, we only need to check whether there is a
path from $(0,j)$ to $(4,j)$ for any $j=0,1,\cdots,2^{r-3}-1$.
Notice that, if the Hamming distance of two codes is odd, then the
gray coding distance is odd, and vice versa. This implies that the
Hamming distance between $g_3^{-1}(0)$ and $g_3^{-1}(4)$ is even. So
there are no edge path connecting $(0,j)$ and $(4,j)$.

Notice that $V(X_{1})=\{1,2,3\}\times\{0,1,\cdots,2^{r_2}-1\}$ for
some $k$.
It is direct to see that $g_3^{-1}(\{1,2,3\})$ is a composite set,
and $g_{r_2}^{-1}\{0,1,\cdots,2^{r_2}-1\}$ is a $r_2$-subcube.
This implies that the subgraph $G_1$ is Cartesian product of a composite
set  and a subcube. By Lemma \ref{cart}, $G_1$
is a composite set, and hence optimal. Thus by the
Congestion Lemma, $EC_{gray}(A_1)$ is minimum.

The argument for $A_2$, $A_3$ and $A_4$ are analogous.
For $A_2$, we see that $g_3^{-1}(\{2,3,4,5,6\})$ is a composite set, and
there are no edge path connecting $(2,j)$ and $(6,j)$.
For $A_3$, we see that $g_3^{-1}(\{3,4,5\})$ is a composite set, and
there are no edge path connecting $(2,j)$ and $(6,j)$.
For $A_4$, we see that $g_3^{-1}(\{5,6,7\})$ is a composite set, and
there are no edge path connecting $(0,j)$ and $(4,j)$.
Thus by the Congestion Lemma, $EC_{gray}(A_2)$,
$EC_{gray}(A_3)$ and $EC_{gray}(A_4)$ are also minimum.

Fix any $1\le j\le 2^{r_2}-1$.
Let $G_j$ and $G'_j$ be the inverse images of $Y_j$ and $Y'_j$ under
$gray$ respectively. The edge cut $B_j$ satisfies conditions (i) and
(ii) of the Congestion Lemma. Note that
$V(Y_j)=\{0,1,\cdots,7\}\times \{0,1,\cdots,j-1\}$.
It is direct to see that $g_3^{-1}(0:7)$ is a $3$-subcube.
And by Lemma \ref{ch}, $g_{r_2}^{-1}(0:j)$ is a composite set.
This implies that $G_j$ is Cartesian product of a
sub-hypercube of order $3$ with a composite set.
By Lemma \ref{cart}, $G_j$
is also a composite set, and hence is optimal. Thus by the
Congestion Lemma, $EC_{gray}(B_j)$ is minimum.

The Partition lemma implies that $WL_{gray}(G,H)$ is minimum.

It is direct to compute that for any $r\ge3$
$$WL_{gray}(Q^r,C_{2^3}\times P_{2^{r-3}})=2^{r_1}(2^{2r_2-1}-2^{r_2-1})+2^{r_2}(3\cdot 2^{2r_1-3}-2^{r_1-1}),$$
where $r_1=3$ and $r_2=r-3$.

This proves Theorem \ref{mainthm}.

\section{Conclusion}
\label{sec:con}

Manuel et al get the exact wirelength of embedding of hypercube $Q^r$ into cylinder
$C_{2^2}\times P_{2^{r-3}}$ with $r\ge2$.
We prove in this paper that gray embedding minimizes the wirelength of embedding hypercube $Q^r$ into cylinder $C_{2^3}\times P_{2^{r-3}}$ with $r\ge3$, and hence get the exact wirelength of this case.
We also get the exact wirelength of embedding hypercube $Q^6$ into torus
$C_{2^3}\times C_{2^3}$.

We ever apply this method to study the case of embedding hypercube $Q^r$ into
cylinder $C_{2^4}\times P_{2^{r-4}}$ for any $r\ge4$.
We tried many embeddings(including gray embedding), but we can't get a partition to apply
Congestion Lemma.

\noindent {\bf Acknowledgements}
Liu and Wang are supported by the National Natural Science
Foundation of China, No. 11371055.  Ji is supported by the National
Natural Science Foundation of China, No. 61300010.

% BibTeX users please use one of
%\bibliographystyle{spbasic}      % basic style, author-year citations
%\bibliographystyle{spmpsci}      % mathematics and physical sciences
%\bibliographystyle{spphys}       % APS-like style for physics
%\bibliography{}   % name your BibTeX data base

% Non-BibTeX users please use

\end{document}